\documentclass[12pt,reqno]{article}

\usepackage[usenames]{color}
\usepackage{amsmath,amssymb,amsthm}
\usepackage{url}
\usepackage{MnSymbol}

\usepackage[colorlinks=true,
linkcolor=webgreen,
filecolor=webbrown,
citecolor=webgreen]{hyperref}

\definecolor{webgreen}{rgb}{0,.5,0}
\definecolor{webbrown}{rgb}{.6,0,0}

\setlength{\textwidth}{6.5in}
\setlength{\oddsidemargin}{.1in}
\setlength{\evensidemargin}{.1in}
\setlength{\topmargin}{-.1in}
\setlength{\textheight}{8.4in}

\theoremstyle{plain}
\newtheorem{theorem}{Theorem}

\theoremstyle{definition}

\newcommand{\seqnum}[1]{\href{http://oeis.org/#1}{\underline{#1}}}

\newcommand{\beq}{\begin{equation}}
\newcommand{\eeq}{\end{equation}}

\renewcommand{\P}{\mathcal{P}_{\text{fin}}}
\newcommand{\X}{\mathcal{X}}

\DeclareMathOperator{\coeff}{\mathrm{Coeff}}
\newcommand\mycom[2]{\genfrac{}{}{0pt}{}{#1}{#2}}

\begin{document}

\begin{center}

\vskip 1cm{\LARGE\bf 
Computing the inverses,
their power sums, and extrema
for Euler's totient 
and other multiplicative functions\\
\vskip .1in}

\vskip 1cm

\large

Max A. Alekseyev\\
Department of Mathematics \\
The George Washington University \\
801 22nd St. NW, Washington, DC 20052 \\
United States \\
\href{mailto:maxal@gwu.edu}{\tt maxal@gwu.edu}
\end{center}

\vskip .2 in

\begin{abstract}
We propose a generic algorithm for computing the inverses of a multiplicative function under the assumption that the set of inverses is finite.
More generally, our algorithm can compute certain functions of the inverses, such as their power sums (e.g., cardinality) or extrema, without direct enumeration of the inverses.
We illustrate our algorithm with Euler's totient function $\varphi(\cdot)$ and the $k$-th power sum of divisors $\sigma_k(\cdot)$. 
For example, we can establish that the number of solutions to $\sigma_1(x) = 10^{1000}$ is $15,512,215,160,488,452,125,793,724,066,873,737,608,071,476$, 
while it is intractable to iterate over the actual solutions.
\end{abstract}

\section{Introduction}

A value of a multiplicative function $f$ on a positive integer $n$ equals the product of its values on the prime powers in the prime factorization of $n$.
That is, if $n=p_1^{e_1}\cdot p_2^{e_2}\cdots p_m^{e_m}$, where $p_1<p_2<\dots<p_m$ are primes and $e_1,e_2,\dots,e_m$ are positive integers, then
$$f(n) = \prod_{i=1}^m f(p_i^{e_i}).$$
In particular, $f(1)=1$.
Famous examples of multiplicative functions include $\tau(n)$, the number of divisors of $n$ (with $\tau(p^e)=e+1$);
$\sigma_k(n)$, the $k$-th power sum of divisors of $n$ (with $\sigma_k(p^e) = \frac{p^{k(e+1)}-1}{p^k-1}$); 
and Euler's totient function $\varphi(n)$ (with $\varphi(p^e)=(p-1)\cdot p^{e-1}$).

In the present work, we propose a generic algorithm for computing the set of inverses (full pre-image) $f^{-1}(n)$ of a multiplicative function $f$ for a given integer $n$
under the following assumptions:
(i) there are finitely many prime powers $p^e$ with $f(p^e)\mid n$, and we can compute them all; 
(ii) the prime factorization of $n$ is known, otherwise it may be a bottleneck to obtain 
(e.g., Contini et al.~\cite{Contini2006} proved that computing $\varphi^{-1}(\cdot)$ can be used for factoring semiprime integers with little overhead).
In particular, our algorithm is well applicable for computing $\varphi^{-1}(n)$ and $\sigma_k^{-1}(n)$ with $k>0$,\footnote{While $\tau(n)$ can be viewed as a special case of $\sigma_k(n)$ with $k=0$,
the assumption (i) does not hold in this case already for $n=2$.}
which we will use for illustration purposes.
While computing inverses of Euler's totient function $\varphi(\cdot)$ was studied to some extent~\cite{Gupta1981,Contini2006,Coleman2009},
computing inverses of other multiplicative functions such as $\sigma_k(\cdot)$, to the best of our knowledge, was not addressed in the literature.
Our algorithm may be viewed as a generalization and streamlining of the ``intelligent exhaustive search'' for $\varphi^{-1}(n)$ in \cite{Contini2006}.

We present an underlying idea of the algorithm in the elegant form of formal Dirichlet series, which allow us to easily extend it to computing certain functions of the inverses, 
such as their power sums (including cardinality as the 0-th power sum) or extrema, without and faster than direct enumeration of the inverses. For example, our algorithm can establish that
\beq\label{sigma1e1000}
\left|\sigma_1^{-1}(10^{1000})\right| = 15,512,215,160,488,452,125,793,724,066,873,737,608,071,476.
\eeq

\section{Formal Dirichlet series framework}

From now on, we assume that $f$ is a fixed multiplicative function.

We find it convenient to define binary multiplication $\times$ and addition $+$ operations on sets of positive integers as follows:
$U\times V = \{ u\cdot v\ :\ u\in U,\ v\in V\}$ and $U+V = U\cup V$.
Equipped with these operations the set $\P(\mathbb{Z}_{>0})$ of finite\footnote{The combinatorial identities~\eqref{F1}, \eqref{F3}, \eqref{F1X}, \eqref{F3X} 
proved in this section
hold in the case of infinite pre-images as well, while our restriction to the finite case is dictated purely by computational needs.}
subsets of positive integers 
forms a commutative semiring (with the additive identity $\emptyset$ and the multiplicative identity $\{1\}$) and allows us to consider formal Dirichlet series 
with coefficients from this semiring.\footnote{To
emphasize that the defined $+$ and $\times$ are semiring operations,
we use their big versions (in place of more traditional $\Sigma$ and $\prod$) to denote the corresponding series summation and product operators.
Some background information on formal Dirichlet series over semirings is given in the Appendix.}

\begin{theorem} We have the following identity for formal Dirichlet series
of variable $s$ over the semiring $(\P(\mathbb{Z}_{>0}),+,\times)$:
\begin{equation}\label{F1}
\bigplus_{n\geq 1}\enskip \frac{f^{-1}(n)}{n^s} = \bigtimes_{\text{prime}\ p}\enskip \bigplus_{e=0}^{\infty}\enskip \frac{\{p^e\}}{f(p^e)^s}.
\end{equation}
For a fixed positive integer $n$ and every divisor $d\mid n$, we further have
\begin{equation}\label{F3}
f^{-1}(d) = \coeff_{d^{-s}}\enskip \bigtimes_{\text{prime}\ p}\enskip \bigplus_{e:\ f(p^e)\mid n}\enskip \frac{\{p^e\}}{f(p^e)^s}.
\end{equation}
\end{theorem}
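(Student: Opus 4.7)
The plan is to prove both identities by expanding the Euler-type products in the semiring $(\P(\mathbb{Z}_{>0}),\times,+)$ and regrouping terms, just as one derives the classical Euler product of a multiplicative Dirichlet series. The key semiring identity $\frac{\{a\}}{x^s}\cdot\frac{\{b\}}{y^s}=\frac{\{ab\}}{(xy)^s}$, together with $+\,=\,\cup$, means that multiplying elementary monomials mirrors integer multiplication, so the multiplicativity of $f$ is precisely what lets an Euler-product decomposition go through.

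For (\ref{F1}), I would first fix the convention that each factor in the product over primes is implicitly prepended by the identity $\{1\}/1^s$ (corresponding to the exponent $e=0$); this is needed for the infinite product to be formally well-defined and matches how classical Euler products are read. Expanding the product then yields a sum indexed by tuples $(e_p)_p$ of nonnegative integers with only finitely many $e_p>0$, and by unique factorization each such tuple matches a unique $m=\prod_p p^{e_p}\geq 1$. By multiplicativity of $f$, the corresponding monomial is $\{m\}/f(m)^s$, so the product equals $\sum_{m\geq 1}\{m\}/f(m)^s$. Grouping the terms by the common value $n=f(m)$ and using $U+V=U\cup V$ turns the inner sum into $f^{-1}(n)$, yielding (\ref{F1}).

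For (\ref{F3}), the same expansion applies to the restricted product, except that now only finitely many primes admit a nonidentity factor (namely those for which some $e\geq 1$ satisfies $f(p^e)\mid n$), so the product is a genuine finite product and there is no convergence concern. The expansion gives $\sum_{m\in S}\{m\}/f(m)^s$, where $S$ is the set of positive integers whose every prime-power part $p^e$ satisfies $f(p^e)\mid n$. For any $d\mid n$, the coefficient of $d^{-s}$ is $\{m\in S:f(m)=d\}$, and I claim this equals $f^{-1}(d)$: the inclusion $\supseteq$ holds because $f(m)=d\mid n$ forces $f(p^e)\mid f(m)\mid n$ for each prime-power part $p^e\parallel m$ by multiplicativity, so automatically $m\in S$; the inclusion $\subseteq$ is immediate from the defining condition $f(m)=d$.

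The only real obstacle is the formal interpretation of the infinite product in (\ref{F1}), which must be handled by declaring the implicit identity factor and checking that the semiring of formal Dirichlet series over $\P(\mathbb{Z}_{>0})$ supports such Euler products; once that convention is adopted, both identities reduce to careful bookkeeping that mirrors the classical computation, with set union taking the role of coefficient addition.
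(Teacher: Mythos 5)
Your proof is correct and follows essentially the same route as the paper: both arguments rest on unique factorization plus multiplicativity of $f$, which identifies each expanded monomial of the Euler product with $\{m\}/f(m)^s$ and groups these by the value $n=f(m)$ to recover $f^{-1}(n)$ (the paper merely runs this bookkeeping in the opposite direction, starting from a decomposition of $f^{-1}(n)$ into products of singletons $\{p_i^{e_i}\}$), and your justification of \eqref{F3} via the observation that $f(m)=d\mid n$ forces $f(p^e)\mid n$ for every prime-power part of $m$ is exactly the paper's remark. Your explicit handling of the implicit identity factor $\{1\}/1^s$ in the infinite product is a point of formal care the paper leaves tacit.
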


\begin{proof}
Multiplicativity of $f$ implies that if $n=f(m)$ and $m=p_1^{e_1}\cdot p_2^{e_2}\cdots p_k^{e_k}$, where $p_1 < p_2 < \dots < p_k$ are primes, 
then $n = f(p_1^{e_1}) \cdot f(p_2^{e_2})\cdots f(p_k^{e_k})$. It follows that $n$ is the product of factors of the form $f(p^e)$, where $p$ is a prime and $e$ is a positive integer,
and no two such factors share the same $p$. In other words,
\begin{equation}\label{Faux}
f^{-1}(n) = \bigplus_{f(p_1^{e_1}) \cdots f(p_k^{e_k})=n}\enskip \bigtimes_{i=1}^k\enskip \{ p_i^{e_i} \},
\end{equation}
where the sum is taken over various tuples of primes $p_1 < p_2 < \dots < p_k$ (with arbitrary $k\geq 0$)
and various positive integer exponents $e_1, e_2, \dots, e_k$ that satisfy $f(p_1^{e_1})\cdot f(p_2^{e_2})\cdots f(p_k^{e_k})=n$.
Multiplying \eqref{Faux} by $n^{-s}$, we get
$$\frac{f^{-1}(n)}{n^s} = \bigplus_{f(p_1^{e_1}) \cdots f(p_k^{e_k})=n}\enskip \bigtimes_{i=1}^k\enskip \frac{\{ p_i^{e_i} \}}{f(p_i^{e_i})^s}.$$
Summing over $n\geq 1$, we obtain 
$$\bigplus_{n\geq 1}\enskip \frac{f^{-1}(n)}{n^s} = \bigplus_{\mycom{p_1 < p_2 < \dots < p_k}{e_1, e_2,\dots,e_k > 0}}\enskip \bigtimes_{i=1}^k\enskip \frac{\{ p_i^{e_i} \}}{f(p_i^{e_i})^s} 
= \bigtimes_{\text{prime}\ p}\enskip \bigplus_{e=0}^{\infty}\enskip \frac{\{p^e\}}{f(p^e)^s},$$
which proves \eqref{F1}.

We remark that terms with $e=0$ in \eqref{F1} represent multiplicative identities (i.e., $\frac{\{p^0\}}{f(p^0)^s}=\frac{\{1\}}{1^s}$ for any prime $p$), 
while for a prime power $p^e$ with $e>0$, $f(p^e)$ may participate in a factorization of $n$ only if $f(p^e)\mid n$. 
Hence, to obtain the full pre-image $f^{-1}(n)$ from \eqref{F1} for a given $n$, 
we can restrict our attention only to such prime powers:
\begin{equation}\label{F2}
f^{-1}(n) = \coeff_{n^{-s}}\enskip \bigtimes_{\text{prime}\ p}\enskip \bigplus_{e:\ f(p^e)\mid n}\enskip \frac{\{p^e\}}{f(p^e)^s}.
\end{equation}
We further remark that for every divisor $d\mid n$, the coefficients of $d^{-s}$ in the series in the right hand side of \eqref{F2} and \eqref{F1} coincide, which implies formula \eqref{F3}.
\end{proof}

Let $(X,\oplus,\otimes)$ be a commutative semiring. 
A mapping $C:(\P(\mathbb{Z}_{>0}),+,\times)\rightarrow (X,\oplus,\otimes)$ is a \emph{weak homomorphism} if for any $U,V\in\P(\mathbb{Z}_{>0})$, we have
$C(U\times V) = C(U)\otimes C(V)$ whenever the sets $U$ and $V$ are element-wise \emph{coprime} (i.e., $\gcd(u,v)=1$ for any $u\in U$ and $v\in V$),
and $C(U+V) = C(U)\oplus C(V)$ whenever $U,V$ are disjoint.
It is easy to see that if $C$ is a homomorphism (i.e., $C(U+V) = C(U)\oplus C(V)$ and $C(U\times V) = C(U)\otimes C(V)$ hold unconditionally), then it is also a weak homomorphism.

\begin{theorem} Let $(X,\oplus,\otimes)$ be a commutative semiring and $C:(\P(\mathbb{Z}_{>0}),+,\times)\rightarrow (X,\oplus,\otimes)$ be a weak homomorphism, then
\begin{equation}\label{F1X}
\bigoplus_{n\geq 1} \frac{C(f^{-1}(n))}{n^s} = \bigotimes_{\text{prime}\ p}\enskip \bigoplus_{e=0}^{\infty}\enskip \frac{C(\{p^e\})}{f(p^e)^s}.
\end{equation}
Furthermore, for a fixed positive integer $n$ and every divisor $d\mid n$,
\begin{equation}\label{F3X}
C(f^{-1}(d)) = \coeff_{d^{-s}}\enskip \bigotimes_{\text{prime}\ p}\enskip \bigoplus_{e:\ f(p^e)\mid n}\enskip \frac{C(\{p^e\})}{f(p^e)^s}.
\end{equation}
\end{theorem}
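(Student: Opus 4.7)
The plan is to deduce Theorem~2 directly from the set-valued identity
$$f^{-1}(n) = \sum_{f(p_1^{e_1}) \cdots f(p_k^{e_k})=n}\ \bigtimes_{i=1}^k\ \{p_i^{e_i}\}$$
established (as equation \eqref{Faux}) in the proof of Theorem~1, by applying the weak homomorphism $C$ term by term and verifying that its two side conditions are met at each step.

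First I would fix a single tuple $(p_1,e_1),\dots,(p_k,e_k)$ appearing in the sum. Since the primes $p_1 < p_2 < \dots < p_k$ are distinct, the singletons $\{p_1^{e_1}\},\dots,\{p_k^{e_k}\}$ are pairwise element-wise coprime, so the multiplicative part of the weak homomorphism property gives
$$C\bigl(\bigtimes_{i=1}^k \{p_i^{e_i}\}\bigr) = \bigotimes_{i=1}^k C(\{p_i^{e_i}\}).$$
Next, I would observe that the whole right-hand side of \eqref{Faux} is a sum of singletons $\{p_1^{e_1}\cdots p_k^{e_k}\}$, one per tuple, and that distinct tuples yield distinct integers by unique factorization. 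Hence these singletons are pairwise disjoint, and the additive part of the weak homomorphism property lets $C$ pass through the outer sum as well. Combining both observations yields
$$C(f^{-1}(n)) = \bigoplus_{f(p_1^{e_1}) \cdots f(p_k^{e_k})=n}\ \bigotimes_{i=1}^k\ C(\{p_i^{e_i}\}).$$

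Dividing by $n^s$, summing over $n\geq 0$, and regrouping the sum over tuples as a product of per-prime series in the target semiring $(X,\otimes,\oplus)$ then produces \eqref{F1X}. Formula \eqref{F3X} follows from the same coefficient-extraction observation used at the end of the proof of Theorem~1: only prime powers $p^e$ with $f(p^e)\mid n$ can contribute to the coefficient of $d^{-s}$ for any $d\mid n$, so restricting the inner sums to such exponents leaves this coefficient unchanged. In fact, equations \eqref{F1X} and \eqref{F3X} are obtained from \eqref{F1} and \eqref{F3} by a purely formal application of $C$ to each $\P(\mathbb{Z}_{>0})$-valued coefficient.

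The main thing to be careful about is precisely that the two side conditions of the weak homomorphism are satisfied at the two places where $C$ is pushed through an operation: coprimality for the inner product (guaranteed by distinctness of the primes $p_i$) and disjointness for the outer sum (guaranteed by unique factorization). Beyond this bookkeeping, the argument is formally identical to that of Theorem~1; no new estimates or convergence issues arise because the Dirichlet series are treated purely formally.
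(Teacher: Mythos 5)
Your proposal is correct and follows essentially the same route as the paper: apply $C$ to the identity \eqref{Faux}, using coprimality of the singleton factors (distinct primes) to push $C$ through the inner product and disjointness of the resulting singletons (unique factorization) to push it through the outer sum, then conclude as in Theorem~1. The paper's own proof is just a terser version of this same argument.
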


\begin{proof}
We remark that the sets inside the product in \eqref{Faux} are coprime, while the products inside the sum are disjoint. Since $C$ is a weak homomorphism, we have
$$C(f^{-1}(n)) = \bigoplus_{f(p_1^{e_1}) \cdots f(p_k^{e_k})=n}\enskip \bigotimes_{i=1}^k\enskip C( \{ p_i^{e_i} \}),$$
which further implies identity \eqref{F1X}.
Formula \eqref{F3X} is derived from \eqref{F1X} with the same arguments we used to derive \eqref{F3} from \eqref{F1}.
\end{proof}

Formula \eqref{F3X} under appropriate choice of the weak homomorphism $C$ 
and its codomain $(X,\oplus,\otimes)$ allows us to efficiently compute certain functions of the inverses without their direct enumeration.
In the next section we give some particular examples.

\section{Examples of weak homomorphisms}

Our first, rather trivial example is given by $(X,\oplus,\otimes) = (\P(\mathbb{Z}_{>0}),+,\times)$ with $C$ being the identity homomorphism. 
In this case, formulae \eqref{F1X} and \eqref{F3X} simply represent the original formulae \eqref{F1} and \eqref{F3} for the full pre-images.
We will keep this trivial example in mind to fit computation of full pre-images into our generic algorithm.

Our second example is given by $(X,\oplus,\otimes) = (\mathbb{Z}_{\geq 0},\max,\cdot)$, which is a commutative semiring of nonnegative integers with
a binary maximum operation (i.e., $u\oplus v = \max\{u,v\}$) and the standard integer multiplication. 
The mapping $C(U) = \max \left(U\cup\{0\}\right)$, giving the maximum element of $U\ne\emptyset$ or 0 for $U=\emptyset$, 
represents a homomorphism between $(\P(\mathbb{Z}_{>0}),+,\times)$ and $(\mathbb{Z}_{\geq 0},\max,\cdot)$.

Similarly, the mapping $C(U)=\min \left(U\cup\{\infty\}\right)$ represents a homomorphism between $(\P(\mathbb{Z}_{>0}),+,\times)$ and the commutative semiring 
$(X,\oplus,\otimes) = (\mathbb{Z}_{>0}\cup\{\infty\},\min,\cdot)$,
where $m\otimes \infty=\infty\otimes m=\infty$ and $m \oplus \infty = \infty \oplus m = m$ for any element $m\in X$ (i.e., $\infty$ represents an additive identity).

An example of a weak homomorphism, which is not a homomorphism, is given by 
$(X,\oplus,\otimes) = (\mathbb{Z}_{\geq 0},+,\cdot)$, a semiring of nonnegative integers with the standard integer addition and multiplication, and 
$C_q(U) = \sum_{u\in U} u^q$, where $q$ is a fixed nonnegative integer.
In particular, $C_0(U)=|U|$ represents the cardinality of a set $U$, while $C_1(U)$ is the sum of elements of $U$.

\section{Algorithm for computing $C(f^{-1}(n))$}

In addition to a multiplicative function $f$, we now fix 
a weak homomorphism $C$ from $(\P(\mathbb{Z}_{>0}),+,\times)$ to a commutative semiring $(X,\oplus,\otimes)$.
To compute $C(f^{-1}(n))$ for a given integer $n$ with a known prime factorization, 
we iteratively compute the right hand side of \eqref{F3X} restricted to the terms with denominators $d^s$ for $d\mid n$. 
This computation naturally splits into three major steps outlined below.

\

\textsc{Step 1.} From the prime factorization of $n$, we easily compute the set of its divisors $D$.
Clearly, $|D|=\tau(n)$.

\

\textsc{Step 2.} We compute the \emph{atomic series}
$$L_p = \bigoplus_{e:\ f(p^e)\mid n} \frac{C(\{ p^e \})}{f(p^e)^s} = \bigoplus_{d\in D} \frac{A_d}{d^s}$$
for every prime $p$ that admits at least one\footnote{We remark that if there is no such $e>0$, then $L_p=\frac{C(\{1\})}{1^s}$
represents the identity for $\otimes$-multiplication of formal Dirichlet series.}
integer $e>0$ with $f(p^e)\mid n$.
Here 
\begin{equation}\label{FAd}
A_d = \bigoplus_{e:\ f(p^e)=d} C(\{ p^e \}).
\end{equation}
We remark that finiteness of full pre-images of the function $f$ implies that the number of the atomic series is finite, 
since for each atomic series $L_p$, some positive power of $p$ must belong to the finite set $\bigcup_{d\in D} f^{-1}(d)$.
Internally it is convenient to store each such atomic series $L_p$ as an associative array $d\mapsto A_d$ indexed by elements $d\in D$.

\

\textsc{Step 3.} We multiply the constructed atomic series $L_{p_1}, L_{p_2}, \dots, L_{p_{\ell}}$
and compute partial products $P_0 = \frac{C(\{1\})}{1^s}$, $P_1 = P_0 \otimes_D L_{p_1}$, $\ldots$, $P_{\ell} = P_{\ell-1} \otimes_D L_{p_{\ell}}$,
where $\otimes_D$ denotes the result of $\otimes$ restricted to the terms with denominators $d^s$ for $d\in D$.
Each multiplication is computed with the formula:
\beq\label{otD}
\left(\bigoplus_{d\in D} \frac{A_d}{d^s}\right) \otimes_D 
\left(\bigoplus_{d\in D} \frac{B_d}{d^s}\right) 
= \bigoplus_{d\in D} \frac{\bigoplus_{t\mid d} A_t \otimes B_{d/t}}{d^s}.
\eeq
That is, if the associative arrays $d\mapsto A_d$ and $d\mapsto B_d$ represent $L_{p_{j+1}}$ and the partial product $P_j$,
then we compute the partial product $P_{j+1}$ as an associative array $d\mapsto \bigoplus_{t\mid d} A_t \otimes B_{d/t}$.

For every $d\in D$, the coefficient of $d^{-s}$ in the final product $P_{\ell}$ gives us $C(f^{-1}(d))$. In particular, $C(f^{-1}(n)) = \coeff_{n^{-s}}\enskip P_{\ell}$.

\

While Step 1 of the algorithm is rather trivial and takes $O(\tau(n))$ arithmetic operations on integers of length $O(\log n)$, 
Step 2 is specific to a particular function $f$ and illustrated with some examples in the next section.
Below we analyze the time complexity of the generic Step 3.

\begin{theorem} Let $n$ be an integer and $D$ be the set of divisors of $n$. Given $\ell$ atomic series for $C(f^{-1}(n))$,
their $\otimes_D$-product can be computed with $O(\ell\cdot \tau(n)^2)$ operations in $(X,\oplus,\otimes)$.
\end{theorem}

\begin{proof}
Computation of the $\otimes_D$-product consists of $\ell$ iterative computations of pairwise $\otimes_D$-multiplications defined by \eqref{otD}.
Since each operand of such $\otimes_D$-multiplication contains $O(\tau(n))$ terms, 
its computation takes $O(\tau(n)^2)$ operations $\oplus$ and $\otimes$.
\end{proof}

As we will see in the next section, for Euler's totient function $\varphi(\cdot)$ we have $\ell \leq \tau(n)$.
In particular, Step 3 in computation of the size or extrema of $\varphi^{-1}(n)$ takes $O(\tau(n)^3)$ arithmetic operations on integers of length $O(\log n)$, 
which all can be done in $O(\tau(n)^3\cdot\log^2 n)$ time.
Combining with results of the next section, we obtain that Steps 1-3 in this case can be done in $O(\tau(n)\cdot\log^2 n\cdot(\tau(n)^2 + \log^4 n))$.
Similarly, for the function $\sigma_k(\cdot)$, we have $\ell \leq \tau(n)\cdot\log_2(n)$,
thus computation of the size or extrema of $\sigma_k^{-1}(n)$ can be done in $O(\tau(n)\cdot\log^3 n\cdot(\tau(n)^2 + \log^4 n))$ time.

\section{Computation of atomic series}

As we explained above, our algorithm is generic and works for any multiplicative function $f$, 
provided that we can construct atomic series $L_{p_1}, \dots, L_{p_{\ell}}$. 
To construct them, we need to determine suitable primes $p$ and compute the corresponding series coefficients $A_d$ defined by \eqref{FAd}.
Below we describe such computation in details for the functions $\varphi(\cdot)$ and $\sigma_k(\cdot)$. 
We remark that in both cases we rely on primality testing, 
which for a number with $n$ bits can be done in time $O(\log^{6+\epsilon} n)$ for any $\epsilon>0$~\cite{Lenstra2002}.
Using probabilistic primality test (e.g., Miller--Rabin test~\cite{Rabin1980}) can save a factor of $\log^3 n$.

\subsection{Euler's totient function}

Our goal is to find the prime powers $p^e$ such that $\varphi(p^e)\mid n$.
Since for $e>0$, $\varphi(p^e) = (p-1)p^{e-1}$, the divisibility $\varphi(p^e)\mid n$ implies 
that $p-1$ divides $n$ and $e\leq \nu_p(n)+1$, where $\nu_p(n)$ is the $p$-adic valuation of $n$ (i.e., the maximum integer $t$ such that $p^t$ but not $p^{t+1}$ divides $n$).
So we need to compute the set $S = \{ p : p-1\in D\ \text{and}\ p\ \text{is prime}\}$, which can be done by going over the elements $d$ of $D$ and testing if $p=d+1$ is prime. 
The set $S$ gives us the indices of the atomic series. For every prime $p\in S$, we compute the corresponding atomic series:
$$L_p = \frac{C(\{1\})}{1^s} \oplus \bigoplus_{e=1}^{\nu_p(n)+1} \frac{C(\{p^e\})}{((p-1)p^{e-1})^s}.$$

\

\textit{Example.} For $n=12$ with the set of divisors $D=\{1,2,3,4,6,12\}$, we obtain the set of primes $S=\{2,3,5,7,13\}$. 
If $C(U)$ is a weak homomorphism into $(X,\oplus,\otimes) = (\mathbb{Z}_{\geq 0},+,\cdot)$ giving the sum of the elements of $U$, then the corresponding atomic series are
$$
\begin{array}{lll}
L_2 & = & \frac{1}{1^s} \oplus \frac{2}{1^s} \oplus \frac{4}{2^s} \oplus \frac{8}{4^s} \\
    & = & \frac{3}{1^s} \oplus \frac{4}{2^s} \oplus \frac{8}{4^s},\\
L_3 & = & \frac{1}{1^s} \oplus \frac{3}{2^s} \oplus \frac{9}{6^s},\\
L_5 & = & \frac{1}{1^s} \oplus \frac{5}{4^s},\\
L_7 & = & \frac{1}{1^s} \oplus \frac{7}{6^s},\\
L_{13} & = & \frac{1}{1^s} \oplus \frac{13}{12^s}.
\end{array}
$$
The partial $\otimes_D$-products in this case are $P_0 = \frac{C(\{1\})}{1^s}=\frac{1}{1^s}$ and
$$
\begin{array}{lllll}
P_1 &=& P_0 \otimes_D L_2 &=& \frac{3}{1^s} \oplus \frac{4}{2^s} \oplus \frac{8}{4^s},\\
P_2 &=& P_1 \otimes_D L_3 &=& \frac{3}{1^s} \oplus \frac{13}{2^s} \oplus \frac{20}{4^s} \oplus \frac{27}{6^s} \oplus \frac{36}{12^s},\\
P_3 &=& P_2 \otimes_D L_5 &=& \frac{3}{1^s} \oplus \frac{13}{2^s} \oplus \frac{35}{4^s} \oplus \frac{27}{6^s} \oplus \frac{36}{12^s},\\
P_4 &=& P_3 \otimes_D L_7 &=& \frac{3}{1^s} \oplus \frac{13}{2^s} \oplus \frac{35}{4^s} \oplus \frac{48}{6^s} \oplus \frac{127}{12^s},\\
P_5 &=& P_4 \otimes_D L_{13} &=& \frac{3}{1^s} \oplus \frac{13}{2^s} \oplus \frac{35}{4^s} \oplus \frac{48}{6^s} \oplus \frac{166}{12^s}.
\end{array}
$$
The coefficient of $12^{-s}$ in $P_5$ gives the sum of $\varphi^{-1}(12) = \{ 13, 21, 26, 28, 36, 42\}$.

\

To analyze the running time of this algorithm, let $T_C(m)$ be the maximum time required to compute $C(\{t\})$ for a positive integer $t$ having at most $m$ bits.

\begin{theorem}
Given an integer $n$ and the set of its divisors $D$, the atomic series for $C(\varphi^{-1}(n))$ 
can be computed in time $O(\tau(n)\cdot \log n \cdot (\log^{5+\epsilon} n + T_C(2\log n)))$ for any $\epsilon>0$.
\end{theorem}

\begin{proof}
The proposed algorithm performs $O(\tau(n))$ primality tests of positive integers below $n+1$, each of which takes time $O(\log^{6+\epsilon} n)$.
For every identified prime $p$, it further computes $\nu_p(n)+2=O(\log n)$ values of $C$ on singleton sets with elements below $p^{\nu_p(n)+1}\leq n(n+1)$, 
which takes time $O(\log n\cdot T_C(2\log n))$.
\end{proof}

We remark that for $C$ computing the size or extrema of $\varphi^{-1}(n)$, we have $T_C(m)=O(1)$ so that the time complexity for computing the atomic series 
becomes simply $O(\tau(n)\cdot \log^{6+\epsilon} n)$.

\subsection{Power sum of divisors}

Let $k$ be a positive integer. Our goal is to find the prime powers $p^e$ such that $\sigma_k(p^e)\mid n$, i.e., $\sigma_k(p^e)=d$ for some $d\in D$.
Since $\sigma_k(p^e) = 1 + p^k + p^{2k} + \dots + p^{ek}$, we have $p^{ek} < d \leq (1+p)^{ek}$ or $p^{ek} \leq d-1 < (1+p)^{ek}$, implying that $p=\lfloor (d-1)^{1/(ek)} \rfloor$.
We let $d$ run over $D$ and $e$ run incrementally from $1$ to $\lfloor \frac{\log_2(d-1)}{k} \rfloor$.
For each such pair $(d,e)$, we test whether $p=\lfloor (d-1)^{1/(ek)} \rfloor$ is prime and whether $\frac{p^{(e+1)k}-1}{p^k-1}=d$. 
If both these conditions hold, we have $\sigma_k(p^e)=d$ and add the term $\frac{C(\{p^e\})}{d^s}$ to $L_p$.
Here we assume that initially all $L_p = \frac{C(\{1\})}{1^s}$, and only those $L_p$ that were enriched with additional terms in the above process represent the atomic series.

\

\textit{Example.} Let us compute the minimum of $\sigma_1^{-1}(n)$ for $n=42$. 
So we use the mapping $C(U)=\min \left(U\cup\{\infty\}\right)$ into the semiring $(X,\oplus,\otimes) = (\mathbb{Z}_{>0}\cup\{\infty\},\min,\cdot)$.
We compute the set  $D=\{1, 2, 3, 6, 7, 14, 21, 42\}$ of divisors of $n$ and use it as described above to determine that only prime powers $p^e\in\{2, 2^2, 5, 13, 41\}$ satisfy $\sigma_1(p^e)\mid n$. 
They give rise to the following atomic series:
$$
\begin{array}{lll}
L_2 & = & \frac{1}{1^s} \oplus \frac{2}{3^s} \oplus \frac{4}{7^s},\\
L_5 & = & \frac{1}{1^s} \oplus \frac{5}{6^s},\\
L_{13} & = & \frac{1}{1^s} \oplus \frac{13}{14^s},\\
L_{41} & = & \frac{1}{1^s} \oplus \frac{41}{42^s}.
\end{array}
$$
The partial $\otimes_D$-products in this case are $P_0 = \frac{C(\{1\})}{1^s}=\frac{1}{1^s}$ and
$$
\begin{array}{lllll}
P_1 &=& P_0 \otimes_D L_2 &=& \frac{1}{1^s} \oplus \frac{2}{3^s} \oplus \frac{4}{7^s},\\
P_2 &=& P_1 \otimes_D L_5 &=& \frac{1}{1^s} \oplus \frac{2}{3^s} \oplus \frac{5}{6^s} \oplus \frac{4}{7^s} \oplus \frac{20}{42^s},\\
P_3 &=& P_2 \otimes_D L_{13} &=& \frac{1}{1^s} \oplus \frac{2}{3^s} \oplus \frac{5}{6^s} \oplus \frac{4}{7^s} \oplus \frac{13}{14^s} \oplus \frac{20}{42^s},\\
P_4 &=& P_3 \otimes_D L_{41} &=& P_3.
\end{array}
$$
The coefficient of $42^{-s}$ in $P_4=P_3$ gives the minimum of $\sigma_1^{-1}(42) = \{ 20,26,41 \}$, which is $20$.

\

As above, let $T_C(m)$ be the maximum time required to compute $C(\{t\})$ for a positive integer $t$ having at most $m$ bits.

\begin{theorem}
Given an integer $n$ and the set of its divisors $D$, 
the atomic series for $C(\sigma_k^{-1}(n))$ can be computed in time $O(\tau(n)\cdot \log n \cdot (\log^{6+\epsilon} n + T_C(\log n)))$ for any $\epsilon>0$.
\end{theorem}

\begin{proof}
The proposed algorithm performs $O(\tau(n)\cdot \log n)$ arithmetic operations and primality tests on positive integers below $n$, each of which takes time $O(\log^{6+\epsilon} n)$.
For every identified prime power $p^e$, it further computes $C(\{p^e\})$, which takes time $O(T_C(\log n))$.
\end{proof}

As above, for $C$ computing the size or extrema of $\sigma_k^{-1}(n)$, we have $T_C(m)=O(1)$ so that the time complexity for computing the atomic series 
becomes simply $O(\tau(n)\cdot \log^{7+\epsilon} n)$. 

\section{Examples in the OEIS}

The Online Encyclopedia of Integer Sequences~\cite{OEIS} contains a number of sequences,
for which the proposed algorithm can compute many terms:%
\footnote{Some of the terms in these sequences were computed by Ray Chandler.}

\

\begin{center}
\begin{tabular}{|c||c|c|c|c|c|}
\hline
& $\varphi^{-1}(n!)$ & $\sigma_1^{-1}(n!)$ & $\varphi^{-1}(10^n)$ & $\sigma_1^{-1}(10^n)$ & $\sigma_1^{-1}(p_n\#)$ \\
\hline\hline
size & \seqnum{A055506} & \seqnum{A055486} & \seqnum{A072074} & \seqnum{A110078} & \seqnum{A153078} \\
\hline
min & \seqnum{A055487} & \seqnum{A055488} & \seqnum{A072075} & \seqnum{A110077} & \seqnum{A153076} \\
\hline
max & \seqnum{A165774} & \seqnum{A055489} & \seqnum{A072076} & \seqnum{A110076} & \seqnum{A153077} \\
\hline
\end{tabular}
\end{center}

\

\noindent
In particular, the value of $\left|\sigma_1^{-1}(10^{1000})\right|$ in \eqref{sigma1e1000} represents the $1000$-th term of the sequence \seqnum{A110078}.

\section{Acknowledgements}

The work was supported by the National Science Foundation under Grant No. IIS-1462107.


\section*{Appendix. Formal Dirichlet series}

As we are not aware if Dirichlet series were considered over commutative semirings like $(\P(\mathbb{Z}_{>0}),+,\times)$ before,
we feel obliged to briefly overview their properties.

Let $\X=(X,\oplus,\otimes)$ be a commutative semiring. 
A \emph{formal}\footnote{The term ``formal'' in the description of Dirichlet series and its variable $s$ reflects the fact that they do not take any values 
(thus a formal Dirichlet series is not a function of $s$).
In contrast, conventional Dirichlet series (e.g., Riemann zeta function $\zeta(s)=\sum_{i=1}^{\infty}\frac{1}{i^s}$) are often viewed as functions of a real or complex variable $s$.} 
\emph{Dirichlet series}
over $\X$ is an infinite sequence $(x_1,x_2,\dots)$ of elements of $X$, which we find convenient to represent as
$$\frac{x_1}{1^s} \oplus \frac{x_2}{2^s} \oplus \dots = \bigoplus_{i=1}^{\infty} \frac{x_i}{i^s},$$
where $s$ is a formal variable.

It is easy to check that formal Dirichlet series over $\X$ form a commutative semiring under the addition and multiplication binary operations inherited from $\X$ as follows:
$$\left( \bigoplus_{i=1}^{\infty} \frac{x_i}{i^s} \right) \oplus \left( \bigoplus_{i=1}^{\infty} \frac{y_i}{i^s} \right) = \bigoplus_{i=1}^{\infty} \frac{x_i\oplus y_i}{i^s}$$
and
$$\left( \bigoplus_{i=1}^{\infty} \frac{x_i}{i^s} \right) \otimes \left( \bigoplus_{i=1}^{\infty} \frac{y_i}{i^s} \right) = \bigoplus_{i=1}^{\infty} \frac{\bigoplus_{j\mid i} x_j\otimes y_{i/j}}{i^s}.$$
These operations can be also viewed as an extension of the operations defined on single terms:
$$\frac{x}{m^s} \oplus \frac{y}{m^s} = \frac{x\oplus y}{m^s}$$
and
$$\frac{x}{m^s} \otimes \frac{y}{n^s} = \frac{x\otimes y}{(m\cdot n)^s}.$$

If $\epsilon$ and $\iota$ represent respectively the additive and multiplicative identities in $\X$, then 
$$\mathcal{E} = \bigoplus_{i=1}^{\infty} \frac{\epsilon}{i^s}$$
and 
$$\mathcal{I} = \frac{\iota}{1^s}\oplus \bigoplus_{i=2}^{\infty} \frac{\epsilon}{i^s}$$ 
represent respectively the additive and multiplicative identities in the semiring of formal Dirichlet series over $\X$.
It is often convenient to omit terms with coefficients equal $\epsilon$, e.g., we can simply write $\mathcal{I} = \frac{\iota}{1^s}$.

We denote the coefficient of $\frac{1}{d^s}$ in a formal Dirichlet series $F$ by $\coeff_{d^{-s}}\enskip F$.

\bibliographystyle{jis} 
\bibliography{invphi.bib} 

\bigskip
\hrule
\bigskip

\noindent 2010 {\it Mathematics Subject Classification}:
Primary 11Y16; Secondary 11A05, 11Y55, 11Y70, 30B50.

\noindent \emph{Keywords: }
multiplicative function, Euler's totient function, sum of divisors, inverse function, Dirichlet series

\bigskip
\hrule
\bigskip

\noindent (Concerned with sequences
\seqnum{A055486},
\seqnum{A055487},
\seqnum{A055488},
\seqnum{A055489},
\seqnum{A055506},
\seqnum{A072074},
\seqnum{A072075},
\seqnum{A072076},
\seqnum{A110076},
\seqnum{A110077},
\seqnum{A110078},
\seqnum{A153076},
\seqnum{A153077},
\seqnum{A153078},
\seqnum{A165774}.)

\end{document}